\documentclass[journal]{IEEEtran}
\usepackage{amsmath,amssymb,amsfonts,amsthm}
\usepackage{booktabs}
\usepackage{graphicx}
\usepackage{algorithm}
\usepackage{algorithmic}
\usepackage{tikz}
\usetikzlibrary{arrows.meta,positioning}
\usepackage{hyperref}

\newcommand{\rL}{\rho(\mathbf{L})}
\newtheorem{proposition}{Proposition}
\newtheorem{remark}{Remark}

\begin{document}

\title{The Loop-Gain Matrix: Coupled Rebalancing Feedback and the
Blind Spots of Scalar Stability Monitoring}

\author{Jihwan~Woo%
\thanks{J.~Woo is a Senior Specialist Solutions Architect, AI/ML, with
Amazon Web Services. The views expressed are those of the author and do
not represent those of Amazon Web Services or its affiliates.}%
\thanks{Working paper, \today. Code, seeds, and result files for every
reported number are archived with the project repository and available
from the author.}}

\maketitle

\begin{abstract}
The stability of markets hosting leveraged exchange-traded products is
governed not by any single product's loop gain but by the spectral radius
of a loop-gain \emph{matrix}, and scalar per-product monitoring
underestimates system feedback by construction. Recent work measures the
self-reinforcement of a leveraged fund's daily close rebalancing through a
scalar loop gain $\ell = \Lambda_c K$ (closing-venue price impact times
rebalancing capital) and treats cross-asset spillovers as estimation bias.
We model complexes of leveraged products written on correlated underlyings
as a coupled feedback system with matrix gain $\mathbf{L}$, and show that
scalar monitoring has two distinct blind spots: (i) \emph{cycle
amplification}, because $\rL \ge \max_i \ell_{ii}$ for nonnegative
coupling, with strict excess under two-way coupling over the coupled
pair's own gains; and (ii)
\emph{transmitted displacement}, which arises under one-way coupling and
is invisible to the receiving asset's own gain. We give a reduced-form
estimator of $\mathbf{L}$ that requires only prices and public fund
assets---no signed order flow---via cross-asset overnight reversals,
reporting its measurement-convention sensitivity explicitly, and
validate it in simulation: the spectral radius is recovered with RMSE
0.005 at $T{=}250$, a lead--lag confounder produces a 2\% false-alarm
rate, and in a calibrated blind-spot configuration the scalar monitor
reports ``safe'' on 100\% of paths while the matrix monitor reports
``unsafe'' on 100\%. Empirically, using the 2026 Korean single-stock
LETF episode, we detect the transmission channel from the SK~Hynix complex
into Samsung Electronics' closing price (difference-in-differences
$z=-2.82$; exact randomization $p=0.0055$ against 182 control pairs;
block-bootstrap 95\% CI excludes zero), scaling with the sender's observed
rebalancing capital. Under a conservative calibration, roughly 41\% of
Samsung's closing displacement variance is imported from the neighboring
complex---displacement that the stock's own ``moderate'' scalar gain
(0.24) cannot register. The same estimator returns nulls across the U.S.
MSTR--Bitcoin--Coinbase complex, whose rebalancing capital is comparable
but whose closing venue is far deeper, consistent with a graded response
surface. Stability monitoring of leveraged-product ecosystems should be
organized around the (complex $\times$ venue) matrix, not around products.
\end{abstract}

\begin{IEEEkeywords}
Closed-loop system identification, feedback stability, spectral radius,
market microstructure, exchange-traded funds, financial stability
monitoring.
\end{IEEEkeywords}

\section{Introduction}\label{sec:intro}

\IEEEPARstart{A}{leveraged} exchange-traded fund (LETF) that promises $L$
times the daily return of an underlying must trade $A(L^2-L)r$ near the
close of each day, where $A$ is fund assets and $r$ the day's return: a
mechanical, publicly predictable order that trades in the direction of the
day's move~\cite{cheng2009}. Because the coefficient $L^2-L$ is positive
for both leveraged ($L{=}2$) and inverse ($L{=}-2$) products, long and
short funds \emph{add} rather than cancel. When the aggregate order is
large relative to the venue that clears it, the closing price it
references is displaced by the order itself, creating a within-auction
feedback loop. Zhao~\cite{zhao2026} formalizes this self-reference
through a scalar \emph{loop gain} $\ell=\Lambda_c K$---the closing-venue
price-impact coefficient times the complex's total rebalancing capital
$K=\sum_f A_f(L_f^2-L_f)$---and shows that the extreme 2026 volatility of
the Korean market, which hosted the largest single-stock LETF complexes
on record, traces to loop gains an order of magnitude beyond U.S.\
levels. This unifies an older empirical dispute: studies reporting
late-day amplification~\cite{shum2016,tuzun2013} and studies reporting
benign, arbitraged-away impact~\cite{ivanov2018} occupy different regions
of one response surface indexed by $\ell$; a recent survey diagnoses
methodological weaknesses across this literature~\cite{lenkey2024}.

The scalar formulation, however, contains a structural assumption:
one risky asset. Real leveraged complexes are written on
\emph{correlated} underlyings that share clearing mechanisms, index
products, and arbitrage capital. In Korea, the two treated stocks
(Samsung Electronics and SK~Hynix) jointly constitute roughly half of
index capitalization, and their complexes rebalance in the same
ten-minute closing auction. In the U.S., the MicroStrategy, spot-Bitcoin,
and Coinbase complexes are levered claims on a common risk factor. When
flow driven by one asset's return displaces a correlated neighbor's
closing price---through index arbitrage, sector market-making inventory,
or common liquidity provision, the mechanisms documented in the
cross-impact literature~\cite{capponi2021,hasbrouck2001}---the feedback
is a \emph{coupled} dynamical system. Its gain is a matrix, and for
nonnegative coupling---products whose flows push neighbors the same
way, the empirically relevant case for common-factor
complexes---elementary spectral theory implies that per-product scalar
monitoring is guaranteed to understate system feedback. The observation is familiar in
engineering, where multivariable systems are never certified loop by
loop~\cite{astrom2008}; it has not, to our knowledge, been applied to
the leveraged-fund stability debate.

This paper makes the observation operational. Our contributions are
fourfold.

\emph{1) Framework} (Section~\ref{sec:model}). We write the closing
fixed point of $n$ coupled complexes as $\mathbf{r}_2 =
(\mathbf{I}-\mathbf{L})^{-1}(\mathbf{L}\mathbf{r}_1 + \mathbf{v})$ with
loop-gain matrix
$\mathbf{L}=\boldsymbol{\Phi}\,\mathrm{diag}(\boldsymbol{\gamma})$, where
$\boldsymbol{\Phi}$ collects own- and cross-impact coefficients and
$\gamma_j$ is complex $j$'s rebalancing capital scaled by liquidity.
Stability is governed by $\rL$, and for elementwise nonnegative
$\mathbf{L}$, $\rL \ge \max_i \ell_{ii}$~\cite{horn2013}. We distinguish
two blind spots of scalar monitoring: \emph{cycle amplification}
($\rL>\max_i\ell_{ii}$, requiring two-way coupling and strict for the
coupled pair at $n{=}2$;
Proposition~\ref{prop:rho}) and \emph{transmitted displacement} (nonzero
off-diagonal entries of $(\mathbf{I}-\mathbf{L})^{-1}-\mathbf{I}$,
arising already under one-way coupling; Proposition~\ref{prop:transmit}).

\emph{2) Identification without order-flow data}
(Section~\ref{sec:ident}). The displacement component of the close
reverts overnight. Regressing overnight returns on the vector of same-day
returns identifies
$\mathbf{M}\equiv(\mathbf{I}-\mathbf{L})^{-1}-\mathbf{I}$ up to the
overnight correction share, and hence $\mathbf{L} =
\mathbf{I}-(\mathbf{I}+\mathbf{M})^{-1}$
(Proposition~\ref{prop:ident}), using only prices and publicly disclosed
fund assets. This matters because the signed intraday order-flow data
used for direct impact estimation in~\cite{zhao2026} are available to few
researchers and to no real-time monitor; from a signal-processing
standpoint the problem is closed-loop system identification from output
data alone~\cite{ljung1999}, made tractable by the known, disclosed
structure of the feedback element.

\emph{3) Validation and falsification attempts}
(Section~\ref{sec:synth}). In a two-asset simulation with a closing fixed
point, the estimator recovers $\mathbf{L}$ and $\rL$ (RMSE 0.005 for
$\rL$ at $T{=}250$); a fundamental lead--lag confounder---the most
dangerous alternative, because cross-autocorrelation among correlated
equities is a documented regularity~\cite{lo1990,chordia2000}---yields a
2\% false-alarm rate, because continuation and reversal load with
opposite signs; and in a calibrated configuration with all own-gains
below threshold but $\rL$ above it, the scalar monitor is wrong on every
path and the matrix monitor right on every path
(Fig.~\ref{fig:blindspot}).

\emph{4) Evidence} (Sections~\ref{sec:korea}--\ref{sec:us}). Around the
Korean single-stock LETF launch (May~27,~2026), the closing price of
Samsung Electronics acquires a reversal loading on SK~Hynix's same-day
return that did not exist before launch, does not exist in placebo pairs,
scales with the Hynix complex's observed rebalancing capital, and appears
only in the direction predicted by relative complex size. Inference
survives exact randomization against all 182 ordered pairs of fourteen
non-treated large caps and a date-block bootstrap. A conservative
calibration attributes ${\sim}41\%$ of Samsung's closing displacement
variance to the imported channel. The same machinery returns nulls on the
U.S.\ MSTR--Bitcoin--Coinbase complex, where rebalancing capital is
comparable but the closing venue is an order of magnitude deeper:
coupling, like the scalar gain itself, is graded by venue impact.

\section{Related Work}\label{sec:related}

\subsection{LETF rebalancing and its market impact}

The debate over LETF rebalancing has unfolded in three acts. The
mechanics came first: \cite{cheng2009} derived the arithmetic
\eqref{eq:arith}, showed that leveraged and inverse products reinforce
rather than offset, and predicted end-of-day pressure proportional to the
day's move; \cite{tuzun2013} drew the analogy to the portfolio insurers
of October 1987, and \cite{shum2016} documented elevated late-day
volatility on high-rebalancing days. The rebuttal came from the
economics of predictable flow. If an order is perfectly foreseeable,
anticipatory liquidity should be waiting for it---the sunshine-trading
logic of~\cite{admati1991}, confirmed empirically for predictable
institutional demand by~\cite{bessembinder2016}---and \cite{ivanov2018}
indeed found that capital flows and anticipation largely neutralize
LETF pressure in U.S.\ data. Parallel measurements of predictable-flow
footprints---VIX products~\cite{brogger2021}, commodity and volatility
funds~\cite{todorov2024}, index rebalancing~\cite{petajisto2011,
pavlova2023}, and ETF ownership generally~\cite{bendavid2018}---returned
effects that are real but modest, and a survey of the LETF branch
concluded that many positive findings rest on fragile
designs~\cite{lenkey2024}. The third act is the reconciliation:
\cite{zhao2026} models arbitrageurs who prey on the mandated order,
shows both camps live on one response surface indexed by the scalar loop
gain $\ell=\Lambda_c K$, and measures the Korean 2026 episode at loop
gains far beyond anything in the U.S.\ record. Throughout all three
acts, however, the unit of analysis is a single underlying and its own
complex; spillovers enter, if at all, as bias to be signed (in
\cite{zhao2026}, contaminated controls make the treatment estimate a
lower bound). Our contribution is to promote the coupling from nuisance
to object: we ask what the \emph{system} of interacting mandates does,
and what a per-product audit of it must miss.

\subsection{Cross-impact and coupled microstructure}

That order flow in one asset moves prices of others is documented
in~\cite{hasbrouck2001} and modeled in~\cite{capponi2021}, who attribute
observed cross-impact largely to common components in order flow. The
closing auction is a natural coupling locus: it concentrates index-linked
and hedging flow~\cite{bogousslavsky2023} and is a known target of
reference-price manipulation~\cite{hillion2004}. Our fixed point
\eqref{eq:fixedpoint} embeds a cross-impact matrix inside a feedback loop
whose forcing element---the mandated rebalance---is publicly sized, a
structure absent from the cross-impact literature. On the feedback side,
destabilizing positive-feedback demand goes back
to~\cite{delong1990,gennotte1990}, with fire-sale
spirals~\cite{shleifer2011}, leverage cycles~\cite{adrian2010}, and
predatory amplification~\cite{brunnermeier2005} as the canonical
mechanisms. These literatures treat one self-reinforcing trader class in
one asset; the matrix structure of interacting mandates is, to our
knowledge, unexplored.

\subsection{Signal processing and market microstructure}

The correspondence between financial engineering and signal processing is
developed systematically in~\cite{feng2016}, including order execution as
optimal control; the execution models
of~\cite{almgren2001,bertsimas1998,obizhaeva2013} and the informed-trading
benchmark of~\cite{kyle1985,grossman1988} underlie our impact
parameterization. Those models are open loop: the order schedule is an
exogenous decision variable. The LETF mandate closes the loop, because
the order is a disclosed function of the price being formed. Estimating
the resulting system from outputs alone is a closed-loop identification
problem~\cite{ljung1999}; the stability language---loop gain, stability
margin, spectral radius---is that of multivariable
feedback~\cite{astrom2008}. Our reduced-form moment exploits a physical
property of the loop (overnight reversion of displacement) in place of an
external excitation signal.

\section{Coupled Rebalancing Feedback}\label{sec:model}

Table~\ref{tab:notation} collects the notation used throughout.

\begin{table}[t]
\caption{Notation.}
\label{tab:notation}
\centering
\footnotesize
\setlength{\tabcolsep}{4pt}
\begin{tabular}{@{}ll@{}}
\toprule
Symbol & Meaning \\
\midrule
$n$ & number of coupled underlyings \\
$A_f,\ L_f$ & assets and leverage multiple of fund $f$ \\
$K_i$ & rebal.\ capital $\sum_{f\in i}A_f(L_f^2-L_f)$ of complex $i$ \\
$\gamma_i$ & scaled capital $K_i/\mathrm{ADV}_i$ \\
$\boldsymbol{\Phi}=[\phi_{ij}]$ & own/cross closing-impact coefficients \\
$\mathbf{L}=\boldsymbol{\Phi}\,\mathrm{diag}(\boldsymbol{\gamma})$ & loop-gain matrix; $\ell_{ij}$ its entries \\
$\rL$ & spectral radius; $1-\rL$ = stability margin \\
$\mathbf{M}=(\mathbf{I}-\mathbf{L})^{-1}-\mathbf{I}$ & displacement-response matrix \\
$r_{1,i},\ r_{2,i}$ & day return; closing-window return \\
$r_{\mathrm{on},i}$ & overnight (close-to-next-open) return \\
$\theta$ & overnight correction share of displacement \\
$\beta_\times$ & cross-reversal coefficient in \eqref{eq:reg} \\
$z$ & difference-in-differences statistic \\
\bottomrule
\end{tabular}
\end{table}

\subsection{Setup}

We first record the rebalancing arithmetic that generates the feedback.
A fund promising $L$ times the daily simple return, with net asset value
$A$ and day return $r$, must end the day holding exposure $LA(1+Lr)$
(its post-rebalance target) while its unrebalanced exposure has drifted
to $LA(1+r)$; the mandated trade is the difference,
\begin{equation}
\underbrace{LA(1+Lr)}_{\text{required}} -
\underbrace{LA(1+r)}_{\text{current}} \;=\; A\,(L^{2}-L)\,r ,
\label{eq:arith}
\end{equation}
which is momentum-directional for every $L\notin[0,1]$: a $+2\times$ fund
carries $L^2-L=2$ and a $-2\times$ fund carries $L^2-L=6$, so leveraged
and inverse products \emph{add}~\cite{cheng2009,zhao2026}.
Equation~\eqref{eq:arith} is exact in simple returns; our empirical work
uses log returns, for which it holds to first order---a gap that is
second order in $|r|$ and worth remembering on the $\pm10\%$ days the
Korean sample contains. Aggregating
\eqref{eq:arith} over all products referencing underlying $i$ defines the
complex's rebalancing capital and its liquidity-scaled version,
\begin{equation}
K_i=\sum_{f\in i} A_f\,(L_f^2-L_f), \qquad
\gamma_i = K_i/\mathrm{ADV}_i ,
\label{eq:K}
\end{equation}
both computable daily from public filings.

Consider now $n$ underlyings.
Let $r_{1,i}$ denote the day return on which complex $i$'s mandate is
sized, and $r_{2,i}$ the closing-window return in which mandates execute.
Impact propagates within and across assets through
$\boldsymbol{\Phi}=[\phi_{ij}]\ge 0$: $\phi_{ii}$ is own closing impact
per unit of scaled flow and $\phi_{ij}$ the cross-impact on asset $i$ of
flow executed in asset $j$, nonzero whenever the two assets share
arbitrage capital, index membership, or market-making
inventory~\cite{capponi2021,hasbrouck2001}. With
$\gamma_j = K_j/\mathrm{ADV}_j$ and
$\mathbf{L}=\boldsymbol{\Phi}\,\mathrm{diag}(\boldsymbol{\gamma})$, the
closing price is a fixed point because each mandate is sized on the
\emph{full} daily return $r_{1,j}+r_{2,j}$, i.e., at the very price the
mandate's execution displaces~\cite{zhao2026}. Under (A1) linear impact
over the relevant flow range, (A2) mandates executed against the closing
book, and (A3) elementwise nonnegative coupling
$\boldsymbol{\Phi}\ge 0$, aggregate displacement solves
$\mathbf{r}_2 = \mathbf{L}(\mathbf{r}_1+\mathbf{r}_2)+\mathbf{v}$, i.e.,
\begin{equation}
\mathbf{r}_2 \;=\; (\mathbf{I}-\mathbf{L})^{-1}
\big(\mathbf{L}\,\mathbf{r}_1 + \mathbf{v}\big),
\label{eq:fixedpoint}
\end{equation}
where $\mathbf{v}$ is closing noise. The fixed point exists and is unique
whenever $1\notin\sigma(\mathbf{L})$; we work throughout on the
economically relevant stable region $\rL<1$, where the
distance $1-\rL$ is the \emph{system stability margin}. The multiplier
has the Neumann-series reading familiar from feedback
analysis~\cite{astrom2008},
\begin{equation}
(\mathbf{I}-\mathbf{L})^{-1} \;=\; \sum_{k=0}^{\infty} \mathbf{L}^{k},
\qquad \text{convergent iff } \rL<1,
\label{eq:neumann}
\end{equation}
in which $\mathbf{L}^{k}$ prices the $k$-th round-trip of the echo: a
displacement in asset $j$ enlarges mandates, whose execution displaces
asset $i$, which enlarges mandates again. Off-diagonal terms appear from
$k{=}1$ onward, so cross-echoes are first order even when each product's
own gain is modest. Fig.~\ref{fig:block} draws the system as a feedback
block diagram: the mandate block $\mathrm{diag}(\boldsymbol{\gamma})$
sizes flows from returns, the venue block $\boldsymbol{\Phi}$ maps flows
to price displacement, and the defining feature of the LETF contract---%
sizing at the very price being formed---closes the loop. Coupling lives
entirely in the off-diagonal entries of $\boldsymbol{\Phi}$; a
per-product monitor sees only the two shaded self-loops.

\begin{figure}[t]
\centering
\begin{tikzpicture}[>=Stealth, node distance=6mm and 9mm, font=\scriptsize,
  blk/.style={draw, rounded corners=1pt, minimum width=13mm, minimum height=6mm, fill=blue!6},
  sum/.style={draw, circle, inner sep=1pt}]
\node[sum] (s1) {$+$};
\node[blk, right=7mm of s1] (g1) {$\gamma_1$};
\node[blk, right=of g1] (p11) {$\phi_{11}$};
\node[sum, right=7mm of p11] (o1) {$+$};
\node[sum, below=11mm of s1] (s2) {$+$};
\node[blk, right=7mm of s2] (g2) {$\gamma_2$};
\node[blk, right=of g2] (p22) {$\phi_{22}$};
\node[sum, right=7mm of p22] (o2) {$+$};
\node[left=5mm of s1] (r1) {$r_{1,1}$};
\node[left=5mm of s2] (r2) {$r_{1,2}$};
\node[right=5mm of o1] (y1) {$r_{2,1}$};
\node[right=5mm of o2] (y2) {$r_{2,2}$};
\draw[->] (r1) -- (s1); \draw[->] (s1) -- (g1);
\draw[->] (g1) -- node[above]{flow$_1$} (p11); \draw[->] (p11) -- (o1);
\draw[->] (o1) -- (y1);
\draw[->] (r2) -- (s2); \draw[->] (s2) -- (g2);
\draw[->] (g2) -- node[below]{flow$_2$} (p22); \draw[->] (p22) -- (o2);
\draw[->] (o2) -- (y2);
\draw[->, red!70!black] (g1.east) ++(1.5mm,0) to[out=-45,in=135]
  node[pos=0.35, right=0pt]{$\phi_{21}$} (o2.north west);
\draw[->, red!70!black] (g2.east) ++(1.5mm,0) to[out=45,in=-135]
  node[pos=0.35, right=0pt]{$\phi_{12}$} (o1.south west);
\draw[->, dashed] (o1.north) -- ++(0,4mm) -| node[pos=0.25, above]{sized at close} (s1.north);
\draw[->, dashed] (o2.south) -- ++(0,-4mm) -| (s2.south);
\end{tikzpicture}
\caption{The coupled rebalancing loop as a block diagram
($n{=}2$).}
\vspace{2pt}
{\footnotesize Note: mandates $\mathrm{diag}(\boldsymbol{\gamma})$ size
flows from returns; the venue $\boldsymbol{\Phi}$ maps flows to closing
displacement; dashed feedback arises because the mandate is sized at the
price it displaces. A per-product monitor observes only the top or
bottom row; the red cross-impact channels $\phi_{12},\phi_{21}$---and
every mixed term of \eqref{eq:neumann}---are outside its view.}
\label{fig:block}
\end{figure}

Write
\begin{equation}
\mathbf{M} \;\equiv\; (\mathbf{I}-\mathbf{L})^{-1}-\mathbf{I}
\label{eq:M}
\end{equation}
for the displacement-response matrix: $M_{ij}$ is the closing
displacement of asset $i$ per unit of asset $j$'s day return, net of
fundamentals.

\subsection{Two blind spots of scalar monitoring}

\begin{proposition}[Cycle amplification]\label{prop:rho}
Let $\mathbf{L}\ge 0$ elementwise. Then $\rL \ge \max_i \ell_{ii}$.
For $n{=}2$,
\begin{equation}
\rL = \tfrac{\ell_{11}+\ell_{22}}{2} +
\sqrt{\Big(\tfrac{\ell_{11}-\ell_{22}}{2}\Big)^{2} +
\ell_{12}\ell_{21}},
\label{eq:rho2}
\end{equation}
so the excess over $\max_i\ell_{ii}$ is strictly positive if and only if
$\ell_{12}\ell_{21}>0$, and is increasing in the coupling product.
\end{proposition}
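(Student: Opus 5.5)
The plan has three parts: a monotonicity argument for the general bound, a direct eigenvalue computation for $n{=}2$, and a reading of the excess off the closed form.

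\emph{General inequality.} For nonnegative matrices the spectral radius is monotone: if $0\le \mathbf{A}\le \mathbf{B}$ elementwise, then $\rho(\mathbf{A})\le\rho(\mathbf{B})$~\cite{horn2013}. We apply this with $\mathbf{A}=\ell_{ii}\,\mathbf{e}_i\mathbf{e}_i^{\top}$, the matrix that keeps only the $(i,i)$ entry of $\mathbf{L}$ and sets every other entry to zero. Since $\mathbf{L}\ge 0$, we have $0\le\mathbf{A}\le\mathbf{L}$, and $\rho(\mathbf{A})=\ell_{ii}$. Hence $\rL\ge\ell_{ii}$ for every $i$, and taking the maximum over $i$ gives the claim. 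If we want the argument self-contained, monotonicity follows from Gelfand's formula: $0\le\mathbf{A}\le\mathbf{B}$ gives $0\le\mathbf{A}^k\le\mathbf{B}^k$ entrywise, so $\|\mathbf{A}^k\|\le\|\mathbf{B}^k\|$ in any monotone norm (for example the max-row-sum norm), and we take $k$-th roots and let $k\to\infty$.

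\emph{The $n{=}2$ formula.} The characteristic polynomial is $\lambda^2-(\ell_{11}+\ell_{22})\lambda+\ell_{11}\ell_{22}-\ell_{12}\ell_{21}$. Its discriminant is $(\ell_{11}-\ell_{22})^2+4\ell_{12}\ell_{21}$, which is $\ge 0$ because the coupling product is nonnegative. Both roots are therefore real, namely
\[
\tfrac{\ell_{11}+\ell_{22}}{2}\pm\sqrt{D},\qquad D=\big(\tfrac{\ell_{11}-\ell_{22}}{2}\big)^2+\ell_{12}\ell_{21}.
\]
It remains to show the "$+$" root has the largest modulus. Its value is at least the midpoint $\tfrac{\ell_{11}+\ell_{22}}{2}\ge 0$. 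The "$-$" root differs from the "$+$" root by $-2\sqrt{D}$, and its absolute value is bounded by the "$+$" root because $|a-s|\le a+s$ whenever $a,s\ge 0$. This yields \eqref{eq:rho2}; Perron--Frobenius also guarantees directly that $\rL$ is itself an eigenvalue.

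\emph{Excess and monotonicity.} Write $m=\tfrac{\ell_{11}+\ell_{22}}{2}$ and $d=\tfrac{|\ell_{11}-\ell_{22}|}{2}$, so that $\max_i\ell_{ii}=m+d$. The excess is then
\[
\sqrt{d^2+\ell_{12}\ell_{21}}-d .
\]
This is zero when $\ell_{12}\ell_{21}=0$ and strictly positive when $\ell_{12}\ell_{21}>0$, because the square root is strictly increasing. The same strict increase of $x\mapsto\sqrt{d^2+x}$ gives monotonicity in the coupling product with the diagonal held fixed. This also settles the "only if" direction: a one-way coupling, where one off-diagonal entry vanishes, gives no excess.

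The only delicate step is identifying the spectral radius with the "$+$" root, which is what requires nonnegativity. Everything else is routine algebra.
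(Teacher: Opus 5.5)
Your proof is correct and follows the paper's route: monotonicity of the spectral radius for nonnegative matrices (you compare $\mathbf{L}$ with $\ell_{ii}\mathbf{e}_i\mathbf{e}_i^{\top}$ rather than with the full diagonal, which is equivalent), then the larger root of the $2\times 2$ characteristic polynomial, with the square-root term compared against $|\ell_{11}-\ell_{22}|/2$. Your check that the ``$+$'' root dominates the ``$-$'' root in modulus, and your Gelfand-formula proof of monotonicity, fill in steps the paper leaves implicit.
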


\begin{proof}
The bound is the standard consequence of monotonicity of the spectral
radius for nonnegative matrices: $\mathbf{L}\ge
\mathrm{diag}(\ell_{11},\dots,\ell_{nn})$ elementwise implies $\rL \ge
\rho(\mathrm{diag}) = \max_i\ell_{ii}$~\cite[Ch.~8]{horn2013}. For
$n{=}2$, \eqref{eq:rho2} is the larger root of the characteristic
polynomial $\lambda^2 - (\ell_{11}+\ell_{22})\lambda +
(\ell_{11}\ell_{22}-\ell_{12}\ell_{21})$; the discriminant exceeds
$(\ell_{11}-\ell_{22})^2/4$ exactly when $\ell_{12}\ell_{21}>0$.
\end{proof}

A supervisor watching each $\ell_{ii}$ therefore never overstates, and
generically understates, the system gain. By \eqref{eq:rho2} the excess
behaves as $\ell_{12}\ell_{21}/|\ell_{11}-\ell_{22}|$ for unequal
diagonals and as $\sqrt{\ell_{12}\ell_{21}}$ in the symmetric case, so
the same coupling budget produces far more cycle amplification for
\emph{symmetric} complexes on a common risk factor than for strongly
lopsided ones. Fig.~\ref{fig:theory} plots the excess
$\rL-\max_i\ell_{ii}$ against the coupling product for three
configurations: the asymmetric Korean diagonal, a symmetric
moderate-gain system, and a symmetric U.S.-like system. The same coupling
budget buys an order of magnitude more cycle amplification in the
symmetric case---the geometry behind our empirical split between
Sections~\ref{sec:korea} and~\ref{sec:us}.

\begin{figure}[t]
\centering
\includegraphics[width=0.95\columnwidth]{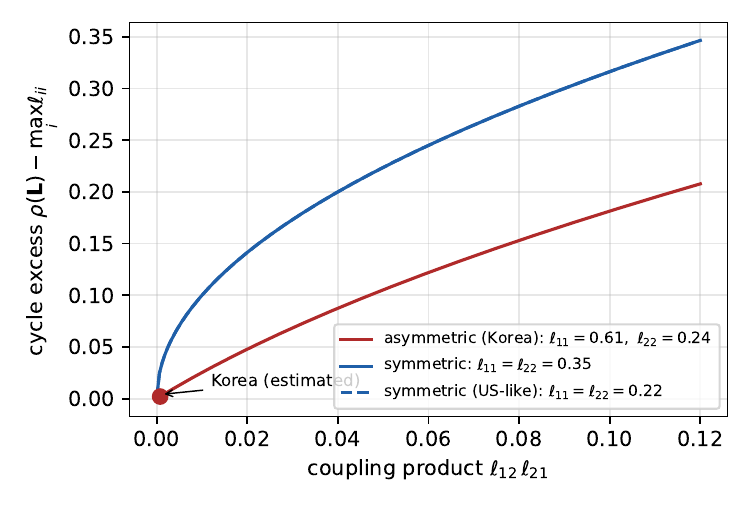}
\caption{Cycle excess $\rL-\max_i\ell_{ii}$ as a function of the coupling
product $\ell_{12}\ell_{21}$, from the closed form \eqref{eq:rho2}. The
excess is largest for symmetric diagonals; the marker locates the Korean
point estimate (headline variant of Table~\ref{tab:quant}), whose
one-directional coupling places it near the horizontal axis.}
\label{fig:theory}
\end{figure}

\begin{proposition}[Transmitted displacement]\label{prop:transmit}
Let $n{=}2$ with $\ell_{11},\ell_{22}\in[0,1)$ and one-way coupling:
$\ell_{21}>0$, $\ell_{12}=0$. Then
$\rL=\max(\ell_{11},\ell_{22})$, yet
\begin{equation}
M_{21} \;=\; \frac{\ell_{21}}{(1-\ell_{11})(1-\ell_{22})} \;>\; 0 ,
\label{eq:M21}
\end{equation}
while a per-product monitor of asset 2 observes only $\ell_{22}$ and its
own multiplier $1/(1-\ell_{22})$.
\end{proposition}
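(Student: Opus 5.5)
With $\ell_{12}=0$ the matrix $\mathbf{L}$ is lower triangular, so the whole statement reduces to explicit $2\times 2$ algebra. The spectrum of a triangular matrix is its diagonal, so $\sigma(\mathbf{L})=\{\ell_{11},\ell_{22}\}$. Since both entries lie in $[0,1)$, this gives $\rL=\max(\ell_{11},\ell_{22})$ directly. Alternatively, one can substitute $\ell_{12}\ell_{21}=0$ into \eqref{eq:rho2} of Proposition~\ref{prop:rho}: the square root collapses to $|\ell_{11}-\ell_{22}|/2$, and the larger root is the maximum of the diagonal entries. This is the boundary case of the "if and only if" in Proposition~\ref{prop:rho}, with zero cycle excess. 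It also confirms that we are in the stable region $\rL<1$, so $(\mathbf{I}-\mathbf{L})^{-1}$ exists and the Neumann series \eqref{eq:neumann} converges.

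Next I would invert the lower-triangular matrix
\begin{equation*}
\mathbf{I}-\mathbf{L}=\begin{pmatrix}1-\ell_{11} & 0\\ -\ell_{21} & 1-\ell_{22}\end{pmatrix}.
\end{equation*}
Its determinant is $(1-\ell_{11})(1-\ell_{22})>0$. The adjugate formula gives an inverse whose diagonal entries are $1/(1-\ell_{11})$ and $1/(1-\ell_{22})$, whose $(1,2)$ entry is $0$, and whose $(2,1)$ entry is $\ell_{21}/[(1-\ell_{11})(1-\ell_{22})]$. Subtracting $\mathbf{I}$ as in \eqref{eq:M} leaves the off-diagonal entry unchanged, which yields \eqref{eq:M21}. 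Strict positivity follows because $\ell_{21}>0$ and both factors in the denominator are positive. As a cross-check, I would expand \eqref{eq:neumann}: the $(2,1)$ entry of $\mathbf{L}^k$ is $\ell_{21}\sum_{a+b=k-1}\ell_{11}^a\ell_{22}^b$. Summing over $k\ge1$ gives the product of two geometric series, which matches the closed form and shows the transmission is first order in $\ell_{21}$.

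Finally, the claim about the per-product monitor is essentially definitional. The scalar monitor of asset 2 computes its gain from complex 2's own capital and own venue impact, so it sees $\ell_{22}=\phi_{22}\gamma_2$. Its implied multiplier is the scalar fixed point $1/(1-\ell_{22})$, which is exactly $[(\mathbf{I}-\mathbf{L})^{-1}]_{22}$. Neither object involves $\ell_{21}$, so the monitor's output is identical whether $\ell_{21}$ is zero or positive, while $M_{21}>0$. No step is a real obstacle. The only point needing care is this last interpretive claim: I would make precise that asset 2's row of the fixed point \eqref{eq:fixedpoint} contains the term $M_{21}r_{1,1}$, which is driven by asset 1's return and is absent from the scalar model of asset 2.
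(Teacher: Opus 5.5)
Your proof is correct and takes the paper's route: the spectrum of the lower-triangular $\mathbf{L}$ is its diagonal, and direct inversion of $\mathbf{I}-\mathbf{L}$ gives the $(2,1)$ entry $\ell_{21}/[(1-\ell_{11})(1-\ell_{22})]$, which equals $M_{21}$ because subtracting $\mathbf{I}$ leaves off-diagonal entries unchanged. Your cross-checks via \eqref{eq:rho2} and the Neumann series \eqref{eq:neumann}, and your explicit treatment of the per-product monitor claim, are also correct; they add detail the paper leaves implicit.
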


\begin{proof}
For triangular $\mathbf{L}$ the spectrum is the diagonal. Direct
inversion of $\mathbf{I}-\mathbf{L}$ for the lower-triangular case gives
the off-diagonal entry of $(\mathbf{I}-\mathbf{L})^{-1}$ as
$\ell_{21}[(1-\ell_{11})(1-\ell_{22})]^{-1}$, which equals $M_{21}$
by \eqref{eq:M}.
\end{proof}

The two propositions separate two failure modes of product-level
surveillance. Cycle amplification raises the \emph{system} gain and thus
the distance to instability; it requires a two-way cycle. Transmitted
displacement leaves the spectral radius untouched but injects another
complex's displacement---sized by the \emph{sender's} capital---into the
receiver's reference price; one direction suffices. Both are invisible to
scalar audits by construction.

\subsection{Reduced-form identification}\label{sec:ident}

Displacement reverts as outside capital corrects the close overnight.
Let $\theta\in(0,1]$ be the overnight correction share and
$\mathbf{r}_{\mathrm{on}}$ the vector of close-to-next-open returns.

\begin{proposition}[Identification]\label{prop:ident}
Suppose $\mathbf{r}_{\mathrm{on}} = -\theta(\mathbf{r}_2 -
\mathbb{E}[\mathbf{r}_2\,|\,\text{fundamentals}]) + \mathbf{w}$, and
assume the non-displacement components---$\mathbf{v}$
in \eqref{eq:fixedpoint}, $\mathbf{w}$, and the fundamental part of
$\mathbf{r}_2$---are orthogonal to $\mathbf{r}_1$. Then the
coefficient matrix $\mathbf{B}$ of the multivariate regression of
$\mathbf{r}_{\mathrm{on}}$ on $\mathbf{r}_1$ satisfies
$\mathbf{B}=-\theta\,\mathbf{M}$, and
\begin{equation}
\mathbf{L} \;=\; \mathbf{I} - (\mathbf{I}+\mathbf{M})^{-1},
\qquad \mathbf{M} = -\mathbf{B}/\theta .
\label{eq:invert}
\end{equation}
\end{proposition}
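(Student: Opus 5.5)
The plan is to substitute the fixed point \eqref{eq:fixedpoint} into the reversal equation and then read off the population regression coefficient using the orthogonality assumptions. First we decompose the closing-window return. By \eqref{eq:fixedpoint}, and using $(\mathbf{I}-\mathbf{L})^{-1}\mathbf{L} = (\mathbf{I}-\mathbf{L})^{-1}-\mathbf{I} = \mathbf{M}$ (which follows from $(\mathbf{I}-\mathbf{L})^{-1}(\mathbf{I}-\mathbf{L})=\mathbf{I}$), we can write
\[
\mathbf{r}_2 = \mathbf{M}\mathbf{r}_1 + (\mathbf{I}-\mathbf{L})^{-1}\mathbf{v} + \mathbf{f},
\]
where $\mathbf{f}$ denotes the fundamental component, so that $\mathbb{E}[\mathbf{r}_2\mid\text{fundamentals}]$ removes $\mathbf{f}$. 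The delicate modelling point is how to treat the conditional expectation. We read the hypothesis as saying that the fundamental part of $\mathbf{r}_2$ is exactly what is subtracted, which leaves the displacement $\mathbf{M}\mathbf{r}_1 + (\mathbf{I}-\mathbf{L})^{-1}\mathbf{v}$. Existence of the inverse is guaranteed on the stable region $\rL<1$ via \eqref{eq:neumann}.

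Substituting into the reversal equation gives
\[
\mathbf{r}_{\mathrm{on}} = -\theta\mathbf{M}\mathbf{r}_1 + \mathbf{u},
\qquad
\mathbf{u} = -\theta(\mathbf{I}-\mathbf{L})^{-1}\mathbf{v} + \mathbf{w}
\]
(plus any leftover fundamental term, if the conditional expectation only partially removes $\mathbf{f}$). The population multivariate regression coefficient is
\[
\mathbf{B}=\mathrm{Cov}(\mathbf{r}_{\mathrm{on}},\mathbf{r}_1)\,\mathrm{Var}(\mathbf{r}_1)^{-1}.
\]
We assume $\mathrm{Var}(\mathbf{r}_1)$ is nonsingular, which the regression implicitly requires, and state this as a regularity condition. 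Since $\mathbf{u}$ is a fixed linear combination of $\mathbf{v}$, $\mathbf{w}$ and the fundamental residual, each of which is orthogonal to $\mathbf{r}_1$, we get $\mathrm{Cov}(\mathbf{u},\mathbf{r}_1)=\mathbf{0}$. Hence $\mathbf{B}=-\theta\mathbf{M}$. Because $\theta\in(0,1]$ is nonzero, this yields $\mathbf{M}=-\mathbf{B}/\theta$.

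Finally we invert. From \eqref{eq:M}, $\mathbf{I}+\mathbf{M}=(\mathbf{I}-\mathbf{L})^{-1}$ is invertible, with inverse $\mathbf{I}-\mathbf{L}$. Therefore $\mathbf{L}=\mathbf{I}-(\mathbf{I}+\mathbf{M})^{-1}$, which is \eqref{eq:invert}.

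The step needing the most care is the orthogonality bookkeeping. Orthogonality of $\mathbf{v}$ to $\mathbf{r}_1$ must survive premultiplication by the deterministic matrix $(\mathbf{I}-\mathbf{L})^{-1}$; it does, because covariance is linear. We must also make explicit that $\mathbf{L}$, and hence $\mathbf{M}$, is treated as constant over the sample (or conditioned on), so that the regression recovers a single matrix. A secondary remark is that $\theta$ is not separately identified: only the product $\theta\mathbf{M}$ is. This is why the statement is phrased ``up to the overnight correction share''.
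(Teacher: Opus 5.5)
Your proposal is correct and follows the paper's proof essentially step for step. Both arguments write $\mathbf{r}_2=\mathbf{M}\mathbf{r}_1+(\mathbf{I}+\mathbf{M})\mathbf{v}$ plus the fundamental part via $(\mathbf{I}-\mathbf{L})^{-1}\mathbf{L}=\mathbf{M}$, use orthogonality to remove the error's correlation with $\mathbf{r}_1$, and invert $\mathbf{I}+\mathbf{M}=(\mathbf{I}-\mathbf{L})^{-1}$. Your linear-projection formula $\mathbf{B}=\mathrm{Cov}(\mathbf{r}_{\mathrm{on}},\mathbf{r}_1)\,\mathrm{Var}(\mathbf{r}_1)^{-1}$, with nonsingularity stated explicitly, fits the stated orthogonality hypothesis slightly more tightly than the paper's appeal to independence and $\mathbb{E}[\mathbf{r}_{\mathrm{on}}\mid\mathbf{r}_1]$.
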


\begin{proof}
By \eqref{eq:fixedpoint} and \eqref{eq:M},
$\mathbf{r}_2=(\mathbf{I}+\mathbf{M})(\mathbf{L}\mathbf{r}_1+\mathbf{v})
= \mathbf{M}\mathbf{r}_1 + (\mathbf{I}+\mathbf{M})\mathbf{v}$ up to the
fundamental component, using
$(\mathbf{I}-\mathbf{L})^{-1}\mathbf{L}=\mathbf{M}$. Independence of
$\mathbf{v},\mathbf{w}$ from $\mathbf{r}_1$ gives
$\mathbb{E}[\mathbf{r}_{\mathrm{on}}|\mathbf{r}_1] =
-\theta\,\mathbf{M}\,\mathbf{r}_1$. The map
$\mathbf{L}\mapsto\mathbf{M}$ is inverted by
$\mathbf{I}+\mathbf{M}=(\mathbf{I}-\mathbf{L})^{-1}$.
\end{proof}

Three qualifications delimit the proposition. \emph{(i) Late news.} If
fundamental news $\mathbf{g}$ arrives during the closing window,
$\mathbf{r}_2$ gains a component that does not revert; provided
$\mathbf{g}$ is independent of $\mathbf{r}_1$, it enters the error and
leaves $\mathbf{B}$ unbiased, at a cost in precision only. \emph{(ii)
Heterogeneous correction.} With asset-specific correction shares
$\boldsymbol{\Theta}=\mathrm{diag}(\theta_i)$, the map becomes
$\mathbf{M}=-\boldsymbol{\Theta}^{-1}\mathbf{B}$, identified when each
$\theta_i$ is calibrated per asset; we use a common $\theta$ and report
sensitivity. \emph{(iii) Time variation.} With
$\mathbf{L}_t=\boldsymbol{\Phi}\,\mathrm{diag}(\boldsymbol{\gamma}_t)$,
the regression estimand is the variance-weighted average of
$\mathbf{M}_t$ over the estimation window; the $\gamma$-interaction
below targets the variation directly. \emph{(iv) Measurement
convention.} The model's $\mathbf{r}_1$ is the pre-close return, but
our empirical regressor is the close-to-close day return
$\mathbf{R}=(\mathbf{I}+\mathbf{M})(\mathbf{r}_1+\mathbf{v})$. Under
joint normality the exact estimand of the $\mathbf{R}$-regression is
\begin{equation}
\mathbf{B}=-\theta\big[\mathbf{M}\boldsymbol{\Sigma}_1
+(\mathbf{I}+\mathbf{M})\boldsymbol{\Sigma}_v\big]
(\boldsymbol{\Sigma}_1+\boldsymbol{\Sigma}_v)^{-1}
(\mathbf{I}+\mathbf{M})^{-1},
\label{eq:estimand}
\end{equation}
with $\boldsymbol{\Sigma}_1,\boldsymbol{\Sigma}_v$ the fundamental and
closing-noise covariances (verified numerically in the replication
code). Because the covariance split is unobserved, we do not claim the
two polar cases of \eqref{eq:estimand} bound the truth; they are the
natural \emph{reporting conventions}---the $\mathbf{L}$-map, exact as
$\boldsymbol{\Sigma}_v\to 0$ since
$\mathbf{M}(\mathbf{I}+\mathbf{M})^{-1}=\mathbf{L}$, and the
$\mathbf{M}$-map, the approximation $\mathbf{R}\approx\mathbf{r}_1$
appropriate when the closing window is a small share of day
variance---and Section~\ref{sec:quant} reports both as a sensitivity
analysis. The \emph{detection} results
(sign, DiD, placebo, randomization) are reduced-form contrasts invariant
to this choice. Under these conditions, with
stationary and ergodic regressors whose second-moment matrix is finite
and nonsingular, finite fourth moments within each regime, and the
error components of qualification~(iv) orthogonal to the regressors,
$\hat{\mathbf{B}}$ is consistent by
standard arguments; we
nevertheless base small-sample inference on exact randomization
(Section~\ref{sec:korea}) rather than on asymptotics.

Three features do the identification work in practice. First, the moment
uses \emph{cross} reversals, which the scalar literature discards.
Second, coupling implies a \emph{negative} cross-coefficient
(displacement reverts), whereas the leading confounder---fundamental
lead--lag among correlated equities~\cite{lo1990,chordia2000}---implies a
positive one (news continues); the sign is an orthogonal discriminator.
Third, $K_{j,t}$ is observed daily, so the coupling loading must scale
with the sender's capital, a restriction a static lead--lag cannot mimic.
$\theta$ is not separately identified from a single cross-section; we
import it from external calibration ($0.88$ in~\cite{zhao2026}) and
report sensitivity over $\theta\in\{0.7,0.88,1.0\}$
(Table~\ref{tab:quant}). Two further properties of the moment follow
directly: statistical power is proportional to $\theta$ (a market whose
displacements are never corrected overnight leaves no reversal to
measure), and the signal grows as the system approaches instability,
since $\|\mathbf{M}\|\to\infty$ as $\rL\to 1$---the estimator is most
informative exactly where monitoring matters most.

\begin{remark}
Estimation of $\rho$ through \eqref{eq:invert} inherits heavy tails on
paths where $\mathbf{I}+\hat{\mathbf{M}}$ is near singular; monitoring
implementations should regularize the inversion. False-alarm rates
reported below are unaffected.
\end{remark}

Algorithms~\ref{alg:est} and~\ref{alg:mon} summarize the procedure in the
form in which an exchange or supervisor would run it. Every input is
public: prices, fund filings, and venue traded values.

\begin{algorithm}[t]
\caption{Reduced-form estimation of $\mathbf{L}$ and $\rL$}
\label{alg:est}
\begin{algorithmic}[1]
\REQUIRE daily open/close prices of the $n$ underlyings; fund shares
outstanding, NAVs, and multiples $L_f$; market and
overnight-window controls; correction share $\theta$
\FOR{each trading day $t$}
  \STATE $r_{1,i}(t) \leftarrow \ln C_{i,t}/C_{i,t-1}$;\quad
         $r_{\mathrm{on},i}(t) \leftarrow \ln O_{i,t+1}/C_{i,t}$
  \STATE $K_{i,t} \leftarrow \sum_{f\in i} A_{f,t-1}(L_f^2-L_f)$;\quad
         $\gamma_{i,t} \leftarrow K_{i,t}/\mathrm{ADV}_{i,t-1}$
         \COMMENT{lagged; no look-ahead}
\ENDFOR
\FOR{each receiver $i=1,\dots,n$}
  \STATE regress $r_{\mathrm{on},i}$ on
         $\{r_{1,j}\}_{j=1}^{n}$ and controls (OLS, robust s.e.)
  \STATE $\hat B_{ij} \leftarrow$ coefficient on $r_{1,j}$
\ENDFOR
\STATE $\hat{\mathbf{M}} \leftarrow -\hat{\mathbf{B}}/\theta$;\quad
       $\hat{\mathbf{L}} \leftarrow \mathbf{I} -
       (\mathbf{I}+\hat{\mathbf{M}}+\epsilon\mathbf{I})^{-1}$
       \COMMENT{$\mathbf{M}$-map; report $\hat{\mathbf{L}}=-\hat{\mathbf{B}}/\theta$
       ($\mathbf{L}$-map) alongside, per qualification~(iv);
       $\epsilon>0$ only if ill-conditioned}
\STATE \textbf{sign screen:} accept coupling $\hat\ell_{ij}$ only if
       $\hat B_{ij}<0$ (reversal, not lead--lag continuation)
\STATE \textbf{scale screen:} corroborate via interaction of $r_{1,j}$
       with observed $\gamma_{j,t}$
\RETURN $\hat{\mathbf{L}}$, $\rho(\hat{\mathbf{L}})$, margins
        $1-\rho(\hat{\mathbf{L}})$
\end{algorithmic}
\end{algorithm}

\begin{algorithm}[t]
\caption{Matrix stability monitoring}
\label{alg:mon}
\begin{algorithmic}[1]
\REQUIRE window length $W$; thresholds $\bar\rho$ (system),
$\bar m$ (transmission); consecutive-exceedance count $c$
\FOR{each day $t$}
  \STATE run Algorithm~\ref{alg:est} on the trailing window
         $\{t-W+1,\dots,t\}$
  \STATE $s_1(t) \leftarrow \rho(\hat{\mathbf{L}}_t)$
         \COMMENT{cycle statistic}
  \STATE $s_2(t) \leftarrow \max_{i\ne j} \hat M_{ij,t}$
         \COMMENT{transmission statistic}
  \IF{$s_1 > \bar\rho$ \OR $s_2 > \bar m$ for $c$ consecutive windows}
    \STATE raise alarm; report the offending (complex, venue) cell
  \ENDIF
\ENDFOR
\end{algorithmic}
\end{algorithm}

\subsection{Dynamic extension: chaining trading days}\label{sec:dynamic}

The fixed point \eqref{eq:fixedpoint} is static within the day. Chaining
days couples the displacements: if a share $\theta$ of the closing error
$\mathbf{e}_t$ (the displacement vector) is corrected overnight, the
uncorrected share $(1-\theta)\mathbf{e}_t$ persists in the next close
while the day's mandate responds to the priced gap
$\boldsymbol{\varepsilon}_{t+1}-\theta\mathbf{e}_t$; substituting into
\eqref{eq:fixedpoint} gives the vector recursion
\begin{equation}
\mathbf{e}_{t+1} = \big[(1-\theta)\mathbf{I}-\theta\mathbf{M}\big]
\mathbf{e}_t
+ \mathbf{M}\,\boldsymbol{\varepsilon}_{t+1}
+ (\mathbf{I}+\mathbf{M})\,\mathbf{v}_{t+1},
\label{eq:chain}
\end{equation}
whose scalar reduction has autoregressive coefficient
$1-\theta(1+M)=1-\theta\Pi_g$, exactly the error chain
of~\cite{zhao2026}---a useful cross-check on the derivation.

\begin{proposition}[Dynamic stability]\label{prop:dynamic}
Let the innovations in \eqref{eq:chain} be i.i.d.\ with finite,
nondegenerate covariance, and consider the causal (stationary-solution)
initialization. The system admits a covariance-stationary solution if
and only
if $\rho\big((1-\theta)\mathbf{I}-\theta\mathbf{M}\big)<1$. If
$\mathbf{L}$ has real spectrum in $[0,1)$, this is equivalent to
\begin{equation}
\rL \;<\; 1-\frac{\theta}{2},
\label{eq:dynthresh}
\end{equation}
a strictly tighter bound than the static existence condition $\rL<1$,
and every spectral direction with $\lambda_i>1-\theta$ approaches
instability through alternating-sign oscillation.
\end{proposition}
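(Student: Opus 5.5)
The proof has two parts: the general stationarity criterion for a VAR(1), and the spectral mapping from $\mathbf{L}$ to the transition matrix $\mathbf{A}\equiv(1-\theta)\mathbf{I}-\theta\mathbf{M}$.

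\textbf{Stationarity criterion.} Write \eqref{eq:chain} as $\mathbf{e}_{t+1}=\mathbf{A}\mathbf{e}_t+\mathbf{u}_{t+1}$, with $\mathbf{u}_{t+1}=\mathbf{M}\boldsymbol{\varepsilon}_{t+1}+(\mathbf{I}+\mathbf{M})\mathbf{v}_{t+1}$ i.i.d. For sufficiency, if $\rho(\mathbf{A})<1$ then $\sum_k\|\mathbf{A}^k\|^2<\infty$ by Gelfand's formula. The causal series $\sum_k\mathbf{A}^k\mathbf{u}_{t-k}$ therefore converges in $L^2$ and is covariance stationary.

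For necessity, take a left eigenvector $\mathbf{y}^*$ of $\mathbf{A}$ with eigenvalue $\mu$, $|\mu|\ge 1$. The scalar process $z_t=\mathbf{y}^*\mathbf{e}_t$ obeys $z_{t+1}=\mu z_t+\mathbf{y}^*\mathbf{u}_{t+1}$. A stationary causal solution would require $\operatorname{Var}(z)=|\mu|^2\operatorname{Var}(z)+\mathbf{y}^*\boldsymbol{\Sigma}_u\mathbf{y}$, and this is impossible once $\mathbf{y}^*\boldsymbol{\Sigma}_u\mathbf{y}>0$.

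Positivity follows from nondegeneracy. Because $\mathbf{I}+\mathbf{M}=(\mathbf{I}-\mathbf{L})^{-1}$ is invertible and $\mathbf{v}$ has nondegenerate covariance independent of $\boldsymbol{\varepsilon}$, $\boldsymbol{\Sigma}_u\succeq(\mathbf{I}+\mathbf{M})\boldsymbol{\Sigma}_v(\mathbf{I}+\mathbf{M})^*\succ 0$. I expect this step to need care: I will state explicitly that nondegeneracy is meant for $\mathbf{v}$, or more generally for $\mathbf{u}$, so that no eigen-direction escapes the noise.

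\textbf{Spectral mapping.} $\mathbf{M}=(\mathbf{I}-\mathbf{L})^{-1}-\mathbf{I}=(\mathbf{I}-\mathbf{L})^{-1}\mathbf{L}$ is a rational function of $\mathbf{L}$. By the spectral mapping theorem, the eigenvalues of $\mathbf{M}$ are $m_i=\lambda_i/(1-\lambda_i)$, where the $\lambda_i\in[0,1)$ are the eigenvalues of $\mathbf{L}$, with matching algebraic multiplicities. This holds without assuming diagonalizability, for example via the Jordan or Schur form. Hence the eigenvalues of $\mathbf{A}$ are
\begin{equation*}
\mu_i=1-\theta-\theta\frac{\lambda_i}{1-\lambda_i}=1-\frac{\theta}{1-\lambda_i},
\end{equation*}
all real. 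Because $\lambda_i\in[0,1)$ and $\theta\in(0,1]$, we have $\theta/(1-\lambda_i)\ge\theta>0$, so $\mu_i<1$ automatically. The condition $|\mu_i|<1$ therefore reduces to $\mu_i>-1$, i.e.\ $\theta/(1-\lambda_i)<2$, i.e.\ $\lambda_i<1-\theta/2$.

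Moreover, $\mu_i$ is decreasing in $\lambda_i$. Taking the maximum over $i$, $\rho(\mathbf{A})<1$ holds if and only if $\max_i\lambda_i<1-\theta/2$. Since the spectrum is real and nonnegative, $\max_i\lambda_i=\rL$, which gives \eqref{eq:dynthresh}. The bound is strictly tighter than $\rL<1$ because $\theta>0$.

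\textbf{Oscillation claim.} We have $\mu_i<0$ exactly when $\theta/(1-\lambda_i)>1$, i.e.\ $\lambda_i>1-\theta$. As $\lambda_i\uparrow 1-\theta/2$, the eigenvalue $\mu_i$ decreases toward $-1$. Along the corresponding eigen-direction the impulse response is $\mu_i^k$, which alternates in sign, so instability is approached through a period-two oscillation rather than monotone divergence. This is a direct reading of the formula for $\mu_i$ and closes the proof.
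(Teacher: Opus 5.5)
Your proposal is correct and follows the paper's proof. Both arguments map the spectrum of $\mathbf{L}$ through $\lambda\mapsto 1-\theta/(1-\lambda)$ to obtain the eigenvalues of $(1-\theta)\mathbf{I}-\theta\mathbf{M}$, reduce $|\mu_i|<1$ to $\lambda_i<1-\theta/2$, and read the oscillation off the sign of $\mu_i$. The only difference is that you actually prove the VAR(1) stationarity criterion, which the paper just asserts: Gelfand's formula for sufficiency, and the left-eigenvector variance identity under causality for necessity. Your nondegeneracy caveat needs no extra assumption if the joint innovation $(\boldsymbol{\varepsilon},\mathbf{v})$ has nondegenerate covariance. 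No independence is required, because $[\,\mathbf{M}\;\;\mathbf{I}+\mathbf{M}\,]$ has full row rank and hence the combined innovation covariance $\boldsymbol{\Sigma}_u\succ 0$.
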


\begin{proof}
Stationarity of the VAR(1) \eqref{eq:chain} requires the coefficient's
spectral radius below one. $\mathbf{M}$ is the rational function
$\lambda\mapsto\lambda/(1-\lambda)$ of $\mathbf{L}$, so by spectral
mapping the coefficient has eigenvalues
$(1-\theta)-\theta\lambda_i/(1-\lambda_i)=1-\theta/(1-\lambda_i)$. For
$\lambda_i\in[0,1)$ and $\theta\in(0,1]$, the condition
$|1-\theta/(1-\lambda_i)|<1$ is $0<\theta/(1-\lambda_i)<2$, i.e.,
$\lambda_i<1-\theta/2$; the eigenvalue is negative---oscillation---%
exactly when $\theta/(1-\lambda_i)>1$, i.e., $\lambda_i>1-\theta$.
\end{proof}

Proposition~\ref{prop:dynamic} sharpens the monitoring message: the
operative threshold is not $\rL=1$ but the dynamic bound
\eqref{eq:dynthresh}---$0.56$ at the calibrated $\theta=0.88$. The
SK~Hynix scalar gain of $0.61$ already exceeds it, consistent with the
oscillatory, boundary-hitting behavior reported for the episode
in~\cite{zhao2026}; the matrix statistics of Algorithm~\ref{alg:mon}
should therefore be read against \eqref{eq:dynthresh} rather than
against unity.

\section{Synthetic Validation}\label{sec:synth}

We simulate the two-asset system \eqref{eq:fixedpoint} with a common
fundamental factor (loadings $1.0$ and $0.8$, so day returns are
correlated without any coupling), $T{=}250$ trading days, 500 Monte Carlo
paths, and overnight reversal
$\mathbf{r}_{\mathrm{on}}=-\theta\,\mathbf{r}_2+\mathbf{w}$. Five
scenarios were fixed before estimation and all are reported:
(a)~symmetric coupling with $\rL=0.60$ and constant $\gamma$ (clean
recovery); (b1)~no coupling, common factor only (false-alarm control);
(b2)~no coupling, fundamental lead--lag of $0.25$ from asset~1 into
asset~2's next-day return (the dangerous confounder); (b3)~coupling and
lead--lag simultaneously (power under contamination);
(c)~own-gains $0.35$ each with coupling $0.30$, hence $\rL=0.65$ (the
blind-spot configuration). In scenarios with time-varying $\gamma_t$
(persistent lognormal, 30\% relative dispersion) we additionally test the
capital-scaling discriminator by interacting the sender's return with the
observed $\gamma_{j,t}$.

\begin{table}[t]
\caption{Synthetic validation of the reduced-form estimator.}
\vspace{2pt}
\label{tab:synth}
\centering
\footnotesize
\setlength{\tabcolsep}{3pt}
\begin{tabular}{@{}lcc@{}}
\toprule
Scenario & Statistic & Value \\
\midrule
(a) coupled, $\rL=0.60$ & $\hat\ell_{11}$ bias / RMSE & $0.000$ / $0.020$ \\
 & $\hat\ell_{12}$ bias / RMSE & $0.000$ / $0.022$ \\
 & $\hat\rho$ RMSE & $\mathbf{0.005}$ \\
 & coupling detection ($|t|>2$) & $1.00$ \\
(b1) factor only & false alarm & $0.06$ \\
(b2) lead--lag only & false alarm & $\mathbf{0.02}$ \\
(b3) coupling $+$ lead--lag & detection & $0.78$ \\
(c) blind spot, $\rL=0.65$ & scalar monitor ``unsafe'' & $0.00$ \\
 & matrix monitor ``unsafe'' & $\mathbf{1.00}$ \\
\midrule
$n{=}3$ (a3) coupled, $\rL=0.60$ & $\hat\rho$ RMSE & $0.004$ \\
$n{=}3$ (c3) blind spot & scalar / matrix ``unsafe'' & $0.00$ / $\mathbf{1.00}$ \\
$n{=}3$ (b3) factor only & off-diag.\ flag (familywise) & $0.14$ \\
\bottomrule
\end{tabular}

\vspace{3pt}
{\footnotesize Note: $T{=}250$ days, 500 MC paths per scenario (300 for
$n{=}3$); ``unsafe'' threshold $0.5$ for both statistics; detection and
false alarm here use the two-sided rule $|t|>2$, whereas
Table~\ref{tab:baseline} applies the one-sided reversal-sign rule, which
is why the false-alarm figures differ slightly. The $n{=}3$ familywise
flag rate (any of six off-diagonals) corresponds to ${\sim}2.4\%$ per
pair.}
\end{table}

\begin{figure}[t]
\centering
\includegraphics[width=0.95\columnwidth]{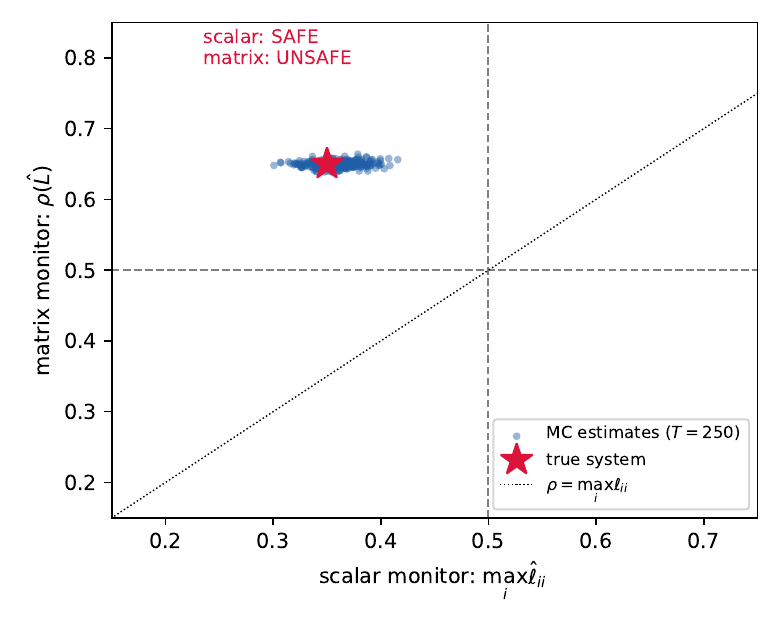}
\caption{The blind spot. Each dot is one Monte Carlo path of the
calibrated configuration: own-gains $\ell_{11}=\ell_{22}=0.35$, coupling
$\ell_{12}=\ell_{21}=0.30$, hence $\rL=0.65$ (star). Against a threshold
of $0.5$ (dashed), the scalar statistic $\max_i\hat\ell_{ii}$ clears every
path as safe while $\rho(\hat{\mathbf{L}})$ flags every path as unsafe.}
\label{fig:blindspot}
\end{figure}

Table~\ref{tab:synth} summarizes. The estimator recovers the matrix and
its spectral radius essentially without bias; the lead--lag confounder is
rejected by the sign discriminator (2\% false alarms at nominal 5\%);
simultaneous coupling and lead--lag attenuate power (0.78) without
reversing conclusions. The $\gamma$-interaction discriminator is weaker
in simulation (false-alarm 22--27\% under time-varying $\gamma$ without
coupling) and is therefore used as corroboration, not as the primary
test. Scenario (c) is the paper's central exhibit
(Fig.~\ref{fig:blindspot}): a system whose every product passes a scalar
audit while the system gain exceeds the same threshold, detected on every
path by the matrix statistic.

\subsection{Comparison with alternative detectors and measurements}

Two comparisons discipline the design. First, against a \emph{naive
cross-predictability detector}---the natural first instinct, flagging
coupling whenever asset $j$'s return predicts asset $i$'s subsequent
return, without controls or a sign restriction---our detector's
advantage is entirely in false alarms (Table~\ref{tab:baseline}):
the naive rule fires on 100\% of no-coupling paths, because a common
factor or an ordinary lead--lag~\cite{lo1990,chordia2000} is
indistinguishable from coupling once the reversal sign and the
conditioning set are discarded. Detection power at true coupling is
identical. The comparison quantifies why the older spillover literature
could not have isolated this channel with unsigned cross-correlations.

\begin{table}[t]
\caption{Detector comparison on the synthetic scenarios.}
\vspace{2pt}
\label{tab:baseline}
\centering
\small
\setlength{\tabcolsep}{4pt}
\begin{tabular}{@{}lcc@{}}
\toprule
Scenario (300 paths) & ours & naive \\
\midrule
coupled ($\rho=0.60$): detection & 1.00 & 1.00 \\
common factor only: false alarm & 0.05 & \textbf{1.00} \\
lead--lag only: false alarm & 0.04 & \textbf{1.00} \\
coupled $+$ lead--lag: detection & 1.00 & 1.00 \\
\bottomrule
\end{tabular}

\vspace{3pt}
{\footnotesize Note: ``ours'' = cross-reversal coefficient with own,
market, and window controls and the reversal-sign restriction
($t<-2$); ``naive'' = unsigned cross-predictability without controls
($|t|>2$).}
\end{table}

Second, against the \emph{direct measurement} of~\cite{zhao2026}, which
estimates the impact curve from signed intraday order flow and multiplies
by capital: the two approaches are complements with different
requirements and outputs (Table~\ref{tab:methods}). Direct measurement
delivers the levels of own-gains with high precision but requires
exchange-internal data and is inherently single-asset and ex post;
the reduced form runs on public data in real time and identifies the
coupled structure, at the cost of importing $\theta$ and a calibrated
diagonal. We use each where it is strong: their diagonal, our
off-diagonal.

\begin{table}[t]
\caption{Measurement approaches compared.}
\vspace{2pt}
\label{tab:methods}
\centering
\small
\setlength{\tabcolsep}{3.5pt}
\begin{tabular}{@{}lcc@{}}
\toprule
 & Direct impact~\cite{zhao2026} & Reduced form (ours) \\
\midrule
data & signed intraday flow & public prices, filings \\
object & scalar $\ell$ per asset & matrix $\mathbf{L}$, $\rL$ \\
coupling & treated as bias & identified \\
timing & ex post (episode) & rolling / real time \\
free inputs & impact curve fit & $\theta$, diagonal calib. \\
precision (own gain) & high & low \\
\bottomrule
\end{tabular}
\end{table}

\section{Evidence I: the Korean Episode}\label{sec:korea}

\subsection{Setting and data}

Sixteen single-stock LETFs on Samsung Electronics and SK~Hynix listed on
May~27,~2026---the largest launch in Korean ETF history. Worldwide
rebalancing capital referencing SK~Hynix peaked near KRW~37.5tn against
KRW~6.6tn for Samsung~\cite{zhao2026}; both complexes rebalance against
the same ten-minute Seoul closing auction, and the pair constitutes
roughly half of index capitalization. Fig.~\ref{fig:episode} shows the
setting: the underlyings rose several-fold in the AI-memory rally,
peaked in late June, whipsawed violently through July, and the flagship
$2\times$ product lost most of its value to volatility decay while the
underlyings ended far above their January levels---the footprint of
repeated overshoot-and-reversal rather than of a directional crash.

\begin{figure}[t]
\centering
\includegraphics[width=0.95\columnwidth]{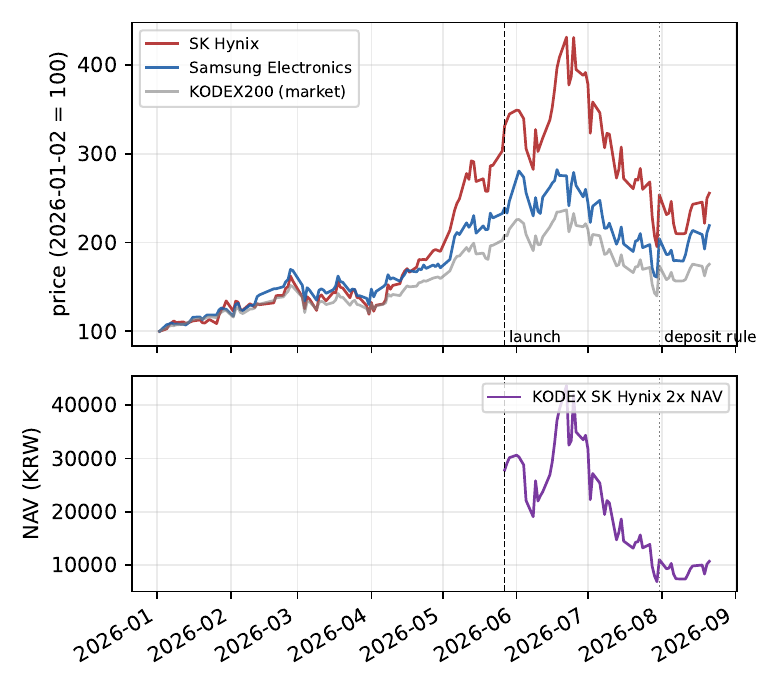}
\caption{The Korean episode, January--August 2026.}
\vspace{2pt}
{\footnotesize Note: top---closing prices indexed to 100 on
January~2,~2026, for the treated pair and the market proxy;
bottom---NAV of the largest single-stock product (KODEX SK~Hynix
$2\times$). Dashed line: single-stock LETF launch (May~27); dotted:
deposit-requirement regulation (July~31).}
\label{fig:episode}
\end{figure}

Daily open/close prices for the
treated pair, fourteen non-treated large capitalization controls, and the
KODEX200 market proxy span January--August 2026 (about 95 pre-launch and
56--60 post-launch trading days). Fund-level listed shares and net asset
values from the exchange's daily statistics give the observed
$K_{j,t}$, hence $\gamma_{j,t}=K_{j,t}/\mathrm{ADV}_{j,t}$ with a 20-day
trailing average of traded value, both lagged one day to avoid look-ahead.
Table~\ref{tab:data} summarizes all data sources.

\begin{table}[t]
\caption{Data sources.}
\label{tab:data}
\centering
\footnotesize
\setlength{\tabcolsep}{4pt}
\begin{tabular}{@{}lll@{}}
\toprule
Series & Source & Period \\
\midrule
KR daily OHLC (16 stocks) & exchange feed & 2026-01--08 \\
KR fund shares/NAV (20 funds) & KRX daily stats & 2026-05--08 \\
KR market proxy (KODEX200) & exchange feed & 2026-01--08 \\
U.S. session control (SMH) & consol.\ tape & 2026-01--08 \\
U.S. funds/stocks (15 tickers) & CRSP daily & 2023--2024 \\
\bottomrule
\end{tabular}
\end{table}

Fig.~\ref{fig:gamma} plots the two domestic complexes' scaled capital:
$\gamma_t$ is of order one throughout the post-launch window---mandated
flow comparable to a full day's traded value per unit
return---and, notably, is \emph{not} extinguished by the July
participation regulations, which collapsed secondary-market turnover by
roughly 90\% while fund assets persisted; participation rules and
loop-gain dials are different instruments.

\begin{figure}[t]
\centering
\includegraphics[width=0.95\columnwidth]{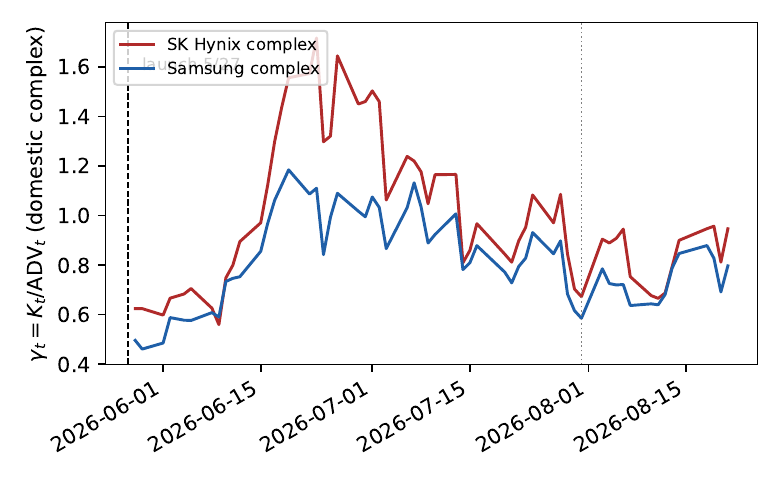}
\caption{Scaled rebalancing capital $\gamma_t$ of the two Korean domestic
complexes (KRX daily listed shares $\times$ NAV over trailing 20-day
traded value, lagged). Dashed line: single-stock LETF launch
(May~27,~2026); dotted line: the deposit-requirement regulation
(July~31), which collapsed turnover but not $\gamma_t$.}
\label{fig:gamma}
\end{figure}

\subsection{Design}

For an ordered pair (sender $A$ $\to$ receiver $B$) we estimate
\begin{multline}
r_{\mathrm{on},B}(t) = \alpha + \beta_\times\, r_{1,A}(t) +
\beta_{\mathrm{own}}\, r_{1,B}(t) \\
+ \mu\, r_{1,\mathrm{mkt}}(t) + \psi\, r_{\mathrm{US}}(t) +
\varepsilon(t),
\label{eq:reg}
\end{multline}
by OLS with heteroskedasticity-robust standard errors; recomputing the
treated statistic with Newey--West HAC errors (5 lags) moves it from
$z=-2.82$ to $z=-2.72$, so residual autocorrelation is not driving the
result, and inference in any case rests on the exact randomization
below. The specification is split before and
after the launch, with the difference-in-differences statistic
$z=(\hat\beta_\times^{\mathrm{post}}-\hat\beta_\times^{\mathrm{pre}})/
(\mathrm{se}^2_{\mathrm{pre}}+\mathrm{se}^2_{\mathrm{post}})^{1/2}$.
Coupling predicts $\beta_\times<0$ after launch for the treated pair
only, with magnitude ordered by the \emph{sender's} capital.

One control is essential and, to our knowledge, unremarked in this
setting: the Korean overnight window (15:30 to next 09:00 KST) contains
the entire U.S.\ trading session, and treated semiconductor stocks load
on it far more heavily than placebo pairs from other sectors. Without the
same-calendar-day U.S.\ semiconductor return $r_{\mathrm{US}}$ (SMH), the
treated cross-coefficients appear \emph{positive}---continuation, not
reversal---and the design would wrongly reject coupling; with the
control, U.S.\ loadings enter with $t\approx 5$ and the artifact is
absorbed. Overnight-reversal designs on Asian markets should treat this
control as mandatory.

\subsection{Results}

\begin{table}[t]
\caption{Cross overnight reversal around the Korean launch.}
\vspace{2pt}
\label{tab:koreagrid}
\centering
\small
\setlength{\tabcolsep}{3pt}
\begin{tabular}{@{}llccc@{}}
\toprule
Direction & window & pre $t$ & post $\hat\beta_\times$ ($t$) & DiD $z$ \\
\midrule
Hynix$\to$Samsung & on & $+3.2$ & $-0.195\ (-1.1)$ & $\mathbf{-2.82}$ \\
 & on, ex.~8/19 & $+3.2$ & $-0.177\ (-1.0)$ & $-2.73$ \\
 & nd & $+1.1$ & $-0.221\ (-0.6)$ & $-1.11$ \\
 & nd, ex.~8/19 & $+1.1$ & $-0.169\ (-0.5)$ & $-1.00$ \\
Samsung$\to$Hynix & on & $+0.7$ & $+0.540\ (+1.4)$ & $+0.93$ \\
 & nd & $-0.2$ & $+1.102\ (+2.1)$ & $+1.95$ \\
Hyundai$\to$Kia & on & $-0.9$ & $+0.010\ (+0.1)$ & $+0.83$ \\
Kia$\to$Hyundai & on & $-1.6$ & $-0.033\ (-0.2)$ & $+0.84$ \\
LGES$\to$SDI & on & $-1.8$ & $-0.157\ (-1.5)$ & $+0.16$ \\
SDI$\to$LGES & on & $+1.8$ & $+0.035\ (+0.5)$ & $-0.79$ \\
\midrule
\multicolumn{5}{@{}l@{}}{$\gamma$-interaction, Hynix$\to$Samsung (post):
$\hat\beta_{\gamma\times} = -0.093$, $t=-2.21$.}\\
\bottomrule
\end{tabular}

\vspace{3pt}
{\footnotesize Note: full pre-registered grid of
spec.~\eqref{eq:reg} with DiD split at May~27,~2026; ``on'' =
overnight receiver window, ``nd'' = next-day close-to-close;
``ex.~8/19'' drops the large U.S.-shock day. Robust standard errors.}
\end{table}

Table~\ref{tab:koreagrid} reports the full pre-registered grid. Three
findings. \emph{First}, the Hynix$\to$Samsung cross-coefficient switches
from positive continuation pre-launch to reversal post-launch in the
overnight window ($z=-2.82$; $-2.73$ excluding August~19, a large
U.S.-shock day). Placebo pairs of the same market and comparable
comovement (Hyundai--Kia, LG Energy Solution--Samsung SDI) show nothing
in any window. \emph{Second}, the loading scales with the sender's
observed capital: interacting $r_{1,A}$ with standardized
$\gamma_{A,t}$ yields $t=-2.21$. \emph{Third}, the effect is
directionally asymmetric exactly as relative complex size predicts:
off-diagonal gain is proportional to the sender's capital, the Hynix
complex is ${\sim}5.7\times$ Samsung's, and the reverse direction is
correspondingly undetectable at this sample size.

A skeptic's alternative deserves its own paragraph: the pre-period
loading is strongly positive ($t=+3.2$), so a negative DiD could in
principle reflect the \emph{decay} of an ordinary lead--lag in the
volatile post-launch regime rather than the arrival of coupling. Three
observations weigh against it. The post-period point estimate is
negative in level, not merely smaller; placebo pairs traded through the
same volatility regime without any shift; and the $\gamma$-interaction
ties the post-period loading to the sender's fund capital within the
post window, a margin a decaying lead--lag does not possess. The
quantification below therefore reports both bases---treating the
pre-period lead--lag as vanished (conservative) or as persisting (upper
bound)---rather than adjudicating what cannot be identified from 56
days. One opposite-signed
cell is recorded honestly: Samsung$\to$Hynix in the next-day window
drifts positive ($z\le+2.16$ across variants), which is continuation
rather than reversal, is absent from placebos, and remains unexplained;
it is not evidence of coupling and we flag it for future data.

\begin{table}[t]
\caption{Inference for the treated statistic.}
\vspace{2pt}
\label{tab:inference}
\centering
\footnotesize
\setlength{\tabcolsep}{3pt}
\begin{tabular}{@{}lc@{}}
\toprule
Exact randomization over control pairs & \\
\quad ordered pairs (14 large caps) & 182 \\
\quad rank of treated $z$ (one-sided) & $0/182$ \\
\quad exact $p$ (one- / two-sided) & $0.0055$ / $0.011$ \\
\quad pre-corr $\ge 0.3$ subset (178) & $p=0.0056$ \\
\quad control-$z$ quantiles (1/50/99\%) & $-1.37$, $+0.65$, $+2.47$ \\
\midrule
Date-block bootstrap (10-day blocks, 1000 draws) & \\
\quad DiD mean & $-0.68$ \\
\quad 95\% interval & $[-1.25,\,-0.15]$ \\
\quad $P(\mathrm{DiD}\ge 0)$ & $0.005$ \\
\bottomrule
\end{tabular}

\vspace{3pt}
{\footnotesize Note: treated statistic = Hynix$\to$Samsung, overnight
window, $z=-2.82$.}
\end{table}

Because only two stocks are treated, clustered asymptotics are
unreliable and we rest inference on exact randomization
(Table~\ref{tab:inference} and Fig.~\ref{fig:randomization}): applying
the identical frozen statistic to all 182 ordered pairs of fourteen
non-treated large caps, no control pair is as negative as the treated
$z$. A ten-day block bootstrap of the DiD excludes zero.

\begin{figure}[t]
\centering
\includegraphics[width=0.95\columnwidth]{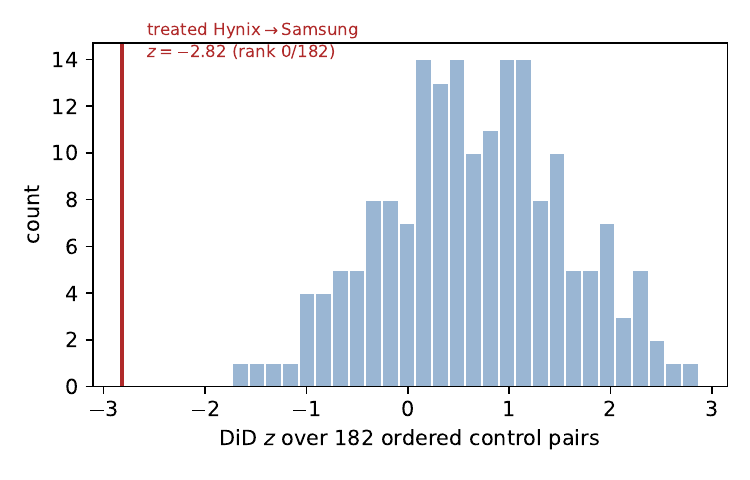}
\caption{Exact randomization distribution. The histogram is the DiD $z$
of the frozen specification applied to every ordered pair of fourteen
non-treated large capitalization stocks (182 pairs); the vertical line is
the treated Hynix$\to$Samsung statistic, more negative than every control
pair (one-sided exact $p=0.0055$).}
\label{fig:randomization}
\end{figure}

One imbalance is recorded: the treated pair's pre-period
return correlation ($0.86$) exceeds the control median ($0.55$), because
very-high-correlation control pairs are structurally scarce; the
high-correlation subset result partially addresses this.

\subsection{Quantification}\label{sec:quant}

\begin{table}[t]
\caption{Assembled $\hat{\mathbf{L}}$: all recorded variants.}
\vspace{2pt}
\label{tab:quant}
\centering
\small
\setlength{\tabcolsep}{4pt}
\begin{tabular}{@{}cccccc@{}}
\toprule
$\theta$ & basis & $M_{21}$ & $\hat\ell_{21}$ & $\rho(\hat{\mathbf{L}})$
 & share \\
\midrule
0.70 & level & 0.279 & 0.083 & 0.613 & 0.52 \\
0.70 & DiD & 0.946 & 0.280 & 0.644 & 0.91 \\
\textbf{0.88} & \textbf{level} & \textbf{0.222} & \textbf{0.066} &
\textbf{0.612} & \textbf{0.41} \\
0.88 & DiD & 0.752 & 0.223 & 0.632 & 0.87 \\
1.00 & level & 0.195 & 0.058 & 0.612 & 0.35 \\
1.00 & DiD & 0.662 & 0.196 & 0.628 & 0.84 \\
\bottomrule
\end{tabular}

\vspace{3pt}
{\footnotesize Note: diagonal
$(\ell_{\mathrm{hy}},\ell_{\mathrm{ss}})=(0.61,0.24)$ and $\theta=0.88$
baseline from~\cite{zhao2026}; ``level'' basis uses the post-launch
cross-coefficient alone (conservative), ``DiD'' the pre--post change
(upper bound); reverse coupling scaled by the capital ratio $6.6/37.5$;
share = fraction of Samsung's closing displacement variance imported
from the Hynix channel via \eqref{eq:share}; shares use the assembled
full-matrix $\mathbf{M}$, so the small reverse coupling is handled
exactly rather than by one-way formulas. The capital-ratio scaling of the
reverse coupling assumes reciprocal cross-impact
($\phi_{12}=\phi_{21}$) and comparable liquidity normalization; it is
second order for every reported statistic. Rows shown are the
conservative $\mathbf{M}$-mapping; the $\mathbf{L}$-mapping of
qualification~(iv) raises the conservative-basis share to $0.84$--$0.91$
across $\theta$ and $\rho$ to at most $0.644$ ($0.86$ on the upper-bound
basis).}
\end{table}

Assembling $\hat{\mathbf{L}}$ with own-gains from the external
calibration, our cross estimate mapped through
Proposition~\ref{prop:ident}, and the reverse coupling scaled by the
capital ratio, two results follow (Table~\ref{tab:quant})---one of them
deliberately deflating. Before importing the diagonal we attempt to
estimate it ourselves: applying the same DiD logic to the \emph{own}
coefficients (pre-launch own-reversal as the generic microstructure
baseline) yields a self-contained Hynix gain of
$\hat\ell_{\mathrm{hy}}=0.42$ ($z=-2.0$), reassuringly close to the
externally calibrated $0.61$, but the Samsung own-effect is
undetectable at this sample size (wrong-signed, $|z|<0.5$), so a fully
self-assembled matrix is not yet usable and we retain the calibrated
diagonal.

\emph{The spectral-radius excess in Korea is small:} under the
conservative $\mathbf{M}$-mapping,
$\rho(\hat{\mathbf{L}})=0.612$--$0.644$ against a scalar maximum of
$0.61$, an excess of $+0.002$ to $+0.034$; under the $\mathbf{L}$-mapping
of qualification~(iv) the sensitivity range widens to at most $0.644$ on the
conservative basis ($0.86$ on the upper-bound basis), still bounded well
away from the static pole.
The reason is structural and follows from
Proposition~\ref{prop:rho}: detected coupling is strongly
one-directional, the matrix is quasi-triangular, and cycle amplification
requires a two-way product. Because the excess scales with the weakly
identified coupling product ($\ell_{12}\ell_{21}/|\ell_{11}-\ell_{22}|$
to leading order here), we do not attach a confidence
interval to it beyond the variant range shown; the honest summary is
that Korean cycle amplification is statistically indistinguishable from
zero, by the structure of the episode rather than by noise. The cycle
blind spot is therefore a warning
about \emph{symmetric} complexes, not a feature of the Korean
configuration---which motivates Section~\ref{sec:us}.

\emph{The transmission blind spot is large:} the headline
$M_{21}\approx 0.22$ means one unit of Hynix news transmits a 22\% pure
displacement into Samsung's close (Proposition~\ref{prop:transmit}). The
variance of the receiver's closing displacement decomposes across
channels as
\begin{equation}
\mathrm{share}_{2\leftarrow 1} \;=\;
\frac{M_{21}^{2}\,\sigma_{1}^{2}}
     {M_{21}^{2}\,\sigma_{1}^{2} + M_{22}^{2}\,\sigma_{2}^{2}},
\label{eq:share}
\end{equation}
with $\sigma_i$ the post-launch daily volatility of $r_{1,i}$ and
$M_{22}=\ell_{22}/(1-\ell_{22})$ the receiver's own overshoot. Because
the two inputs are correlated ($0.86$ in the pre-period), a covariance
cross-term of the same sign exists and channel attribution is not
unique; \eqref{eq:share} reports the standard component-variance
attribution, which allocates none of the shared movement to either
channel. Evaluated
at the data, \eqref{eq:share} gives ${\sim}41\%$ of
Samsung's closing displacement variance imported from the Hynix
channel on the conservative basis (30--84\% across the delta-method
interval on the post coefficient; ${\sim}87\%$ on the upper-bound basis).
Samsung's own scalar gain of $0.24$---a multiplier of $1.32$,
``moderate'' by any per-product standard---is silent about all of it.

\section{Evidence II: the U.S. Symmetric Complex}\label{sec:us}

The U.S.\ MicroStrategy--Bitcoin--Coinbase complexes are levered claims
on one risk factor and, unlike Korea, approximately symmetric---the
configuration for which Proposition~\ref{prop:rho} makes cycle
amplification largest. Using CRSP daily data (January 2023--December
2024) for seven funds (MSTU, MSTZ, MSTX with its leverage change from
$1.75\times$ to $2\times$ on October~29,~2024, BITX, BITU, BITI, CONL),
the underlyings (MSTR, COIN, and BITO as the exchange-hours Bitcoin
proxy), placebo pairs (XOM--CVX, JPM--GS), and QQQ overnight returns as
the window control, we re-run design \eqref{eq:reg} with the DiD split at
the MSTX launch (August~15,~2024); the post period contains the
November--December 2024 MSTR episode.

\begin{table}[t]
\caption{U.S. symmetric complex: capital and cross-reversal results.}
\vspace{2pt}
\label{tab:us}
\centering
\small
\setlength{\tabcolsep}{3.5pt}
\begin{tabular}{@{}lccc@{}}
\toprule
\multicolumn{4}{@{}l}{\emph{Rebalancing capital (scaled), post-launch}}\\
Complex & mean $\gamma$ & max $\gamma$ & 2024Q4 mean \\
\midrule
MSTR & 0.77 & \textbf{1.72} & 0.92 \\
COIN & 0.65 & 0.98 & 0.67 \\
\midrule
\multicolumn{4}{@{}l}{\emph{Cross-reversal DiD (treated directions)}}\\
Direction & pre $t$ & post $\hat\beta_\times$ ($t$) & DiD $z$ \\
\midrule
BITO$\to$MSTR & $+1.2$ & $-0.001\ (-0.0)$ & $-1.06$ \\
MSTR$\to$BITO & $-0.7$ & $-0.020\ (-0.4)$ & $-0.36$ \\
MSTR$\to$COIN & $+0.6$ & $+0.001\ (+0.0)$ & $-0.06$ \\
COIN$\to$MSTR & $+1.2$ & $-0.066\ (-0.9)$ & $-1.45$ \\
BITO$\to$COIN & $+1.7$ & $+0.040\ (+0.5)$ & $-0.54$ \\
COIN$\to$BITO & $+0.1$ & $+0.044\ (+0.6)$ & $+0.52$ \\
\midrule
Placebos (4 directions) & & & $|z|\le 1.22$ \\
\bottomrule
\end{tabular}

\vspace{3pt}
{\footnotesize Note: CRSP daily, 2023--2024; DiD split at the MSTX
launch (Aug.~15,~2024); $\gamma$ statistics are post-launch; the
BTC-complex $\gamma$ is omitted because those funds rebalance in CME
futures, for which an ETF-based denominator is invalid.}
\end{table}

Table~\ref{tab:us} reports the result, and Fig.~\ref{fig:forest}
collects every tested direction across both markets in one view: the
MSTR complex's scaled capital reaches $1.72$ in the mania---comparable to
Korean levels---yet all six treated cross-reversal directions are null,
as are placebos.

\begin{figure}[t]
\centering
\includegraphics[width=0.95\columnwidth]{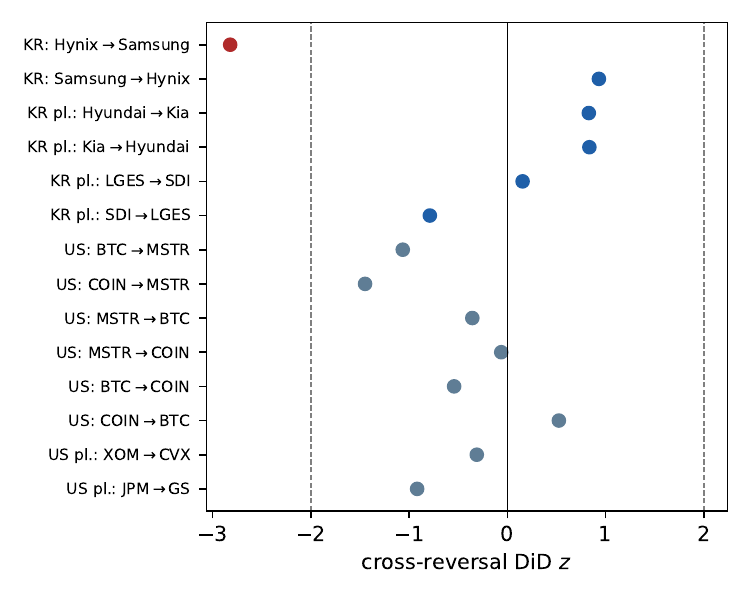}
\caption{All tested cross-reversal directions in one view: DiD $z$ per
ordered direction (overnight window, frozen specification), Korea and
U.S., treated and placebo. Dashed lines at $\pm 2$. Exactly one
direction exceeds the band---Hynix$\to$Samsung, the direction predicted
by relative complex size in the high-impact market.}
\label{fig:forest}
\end{figure} The null is
informative rather than disappointing. The loop gain is $\Lambda_c K$,
not $K$: U.S.\ closing auctions absorb flow at impact coefficients an
order of magnitude below Korea's (scalar gains 0.03--0.22 versus
0.61~\cite{zhao2026}), and by \eqref{eq:rho2} coupling terms inherit the
product of impacts. Coupling, like the scalar gain, is graded by venue
depth---and an estimator that ``found'' coupling wherever assets are
correlated would have failed here; the U.S.\ panel therefore doubles as a
scale placebo for the machinery. Three caveats are recorded: the U.S.\
overnight window contains continuous crypto trading (uncontrolled, which
biases toward zero and is conservative for our null); the pre-period
already contains the BITX complex, so the DiD tests the MSTR-launch
increment only; and BITO, our exchange-hours Bitcoin proxy, is a
futures-based fund whose roll yield adds noise to the sender return.
The sample ends with our CRSP coverage (December
2024), excluding the 2025--26 growth of these complexes.

\section{Monitoring Implications, Limitations, and
Conclusion}\label{sec:policy}

Regulatory responses to the Korean episode---capacity caps,
participation restrictions, flexible leverage---are being designed
product by product, i.e., in scalar terms, and the monitoring dashboards
that motivated them are organized the same way. Our results imply that
the object of surveillance should be the loop-gain matrix of each
(complex $\times$ clearing venue) system. Its spectral radius bounds
cycle amplification and, for nonnegative coupling, is guaranteed by
Proposition~\ref{prop:rho} to be
understated by any scalar audit, with the largest understatement for
symmetric complexes on a common factor---precisely the configuration now
growing in U.S.\ crypto-linked products. Its resolvent off-diagonals
price the displacement each asset imports from its neighbors' products
(Proposition~\ref{prop:transmit})---the channel that, in Korea, accounts
for nearly half of a ``moderate-gain'' stock's closing displacement
variance. Both objects are estimable in reduced form from prices and
public fund disclosures alone, which makes matrix monitoring
implementable by exchanges and supervisors in real time, without the
proprietary order-flow data that direct impact estimation requires.

Three limitations bound the claims, and each marks future work. First,
although the self-contained own-gain estimate agrees with the external
calibration where the data have power (Hynix, $0.42$ against $0.61$),
the Samsung diagonal and the correction share $\theta$ still lean on
external calibration~\cite{zhao2026}; joint estimation over a longer
panel is the natural next step. Second, the Korean post-launch sample is
short (56--60 trading days), which is why inference rests on exact
randomization, HAC and HC results are shown side by side, and every
attempted specification is reported. Third, the structural
interpretation of $\boldsymbol{\Phi}$ as venue cross-impact is
maintained rather than tested; distinguishing index-arbitrage,
inventory, and hedging channels requires flow attribution we
deliberately avoid. The dynamic extension of
Section~\ref{sec:dynamic} lowers the operative threshold to
\eqref{eq:dynthresh}, and the interaction of LETF complexes with other
self-referencing mandates---option gamma hedging, volatility targeting,
portfolio insurance~\cite{gennotte1990,shleifer2011}---forms exactly the
two-way cycles for which the spectral-radius blind spot is largest.

The one-sentence conclusion survives both the positive and the null
results: market stability is a property of the loop-gain matrix, and any
monitor that reads products one at a time is reading the diagonal of a
system whose risk lives off it.

\end{document}